\documentclass[a4paper,USenglish,cleveref, autoref, thm-restate,pdfa]{oasics-v2021}

%\listfiles

%This is a template for producing OASIcs articles. 
%See oasics-v2021-authors-guidelines.pdf for further information.
%for A4 paper format use option "a4paper", for US-letter use option "letterpaper"
%for british hyphenation rules use option "UKenglish", for american hyphenation rules use option "USenglish"
%for section-numbered lemmas etc., use "numberwithinsect"
%for enabling cleveref support, use "cleveref"
%for enabling autoref support, use "autoref"
%for anonymousing the authors (e.g. for double-blind review), add "anonymous"
%for enabling thm-restate support, use "thm-restate"
%for enabling a two-column layout for the author/affilation part (only applicable for > 6 authors), use "authorcolumns"
%for producing a PDF according the PDF/A standard, add "pdfa"

%\pdfoutput=1 %uncomment to ensure pdflatex processing (mandatatory e.g. to submit to arXiv)
\hideOASIcs %uncomment to remove references to OASIcs series (logo, DOI, ...), e.g. when preparing a pre-final version to be uploaded to arXiv or another public repository

%\graphicspath{{./graphics/}}%helpful if your graphic files are in another directory

\usepackage{amsmath}
%\usepackage{amssymb}
% Comment out the line below if using A4 paper size
%\usepackage[letterpaper]{geometry}
%\usepackage{ltexpprt}
\usepackage{hyperref}
\usepackage[utf8]{inputenc}

\usepackage{listings}
\usepackage{xspace}
\usepackage{caption}
\usepackage{pgfplots}
\usepackage{subcaption}
\usepackage{float}

\newcommand{\BO}[1]{{ O}\left(#1\right)}

\renewcommand{\epsilon}{\varepsilon}

\DeclareMathOperator*{\argmax}{arg\,max}

%%%%%%%%%%%%%%%%%%%%%%%%%%%%%%%%%%%%%%%%%%%%%%%%%%%%%%%%%%%%%%%%%%%%%%
%    Handling references
%%%%%%%%%%%%%%%%%%%%%%%%%%%%%%%%%%%%%%%%%%%%%%%%%%%%%%%%%%%%%%%%%%%%%%

\newcommand{\HLink}[2]{\hyperref[#2]{#1~\ref*{#2}}}
\newcommand{\HLinkSuffix}[3]{\hyperref[#2]{#1\ref*{#2}{#3}}}

\newcommand{\tablab}[1]{\label{tab:#1}}
\newcommand{\tabref}[1]{\HLink{Table}{tab:#1}}

\newcommand{\figlab}[1]{\label{fig:#1}}
\newcommand{\figref}[1]{\HLink{Figure}{fig:#1}}

\newcommand{\thmlab}[1]{{\label{theo:#1}}}
\newcommand{\thmref}[1]{\HLink{Theorem}{theo:#1}}

\newcommand{\corlab}[1]{\label{cor:#1}}

\providecommand{\deflab}[1]{\label{def:#1}}
\newcommand{\defref}[1]{\HLink{Definition}{def:#1}}

\newcommand{\lemlab}[1]{\label{lemma:#1}}
\newcommand{\lemref}[1]{\HLink{Lemma}{lemma:#1}}%

\newcommand{\seclab}[1]{\label{section:#1}}
\newcommand{\secref}[1]{\HLink{Section}{section:#1}}%

\providecommand{\eqlab}[1]{}%
\renewcommand{\eqlab}[1]{\label{equation:#1}}
\newcommand{\Eqref}[1]{\HLinkSuffix{Equation~}{equation:#1}{}}

\newcommand{\im}{\texttt{IM}\xspace}
\newcommand{\bs}{\texttt{BS}\xspace}
\newcommand{\sbs}{\texttt{U-BS}\xspace}
\newcommand{\tbs}{\texttt{T-BS}\xspace}
\newcommand{\mds}{\texttt{MDS}\xspace}
\newcommand{\xc}{\texttt{X3C}\xspace}

\renewcommand{\S}{\mathcal{S}}
\newcommand{\load}[1]{\ell_{k,L}\left(#1\right)}
\newcommand{\lload}[2]{\ell_{#1}\left(#2\right)}
\newcommand{\loads}[1]{\mathcal{L}_{k,L}\left(#1\right)}
\newcommand{\lloads}[2]{\mathcal{L}_{#1}\left(#2\right)}
\newcommand{\spread}[1]{\sigma\left(#1\right)}
\newcommand{\sspread}[1]{\sigma^{(U)}\left(#1\right)}
\newcommand{\tspread}[1]{\sigma_{\gamma}^{(T)}\left(#1\right)}
\newcommand{\ttspread}[2]{\sigma_{#1}^{(T)}\left(#2\right)}

\bibliographystyle{plainurl}

\title{On the Bike Spreading Problem}

\author{Elia Costa}{Dept. of Information Engineering, University of Padova, Italy}{}{}{}
\author{Francesco Silvestri\footnote{Corresponding author.}}{Dept. of Information Engineering, University of Padova, Italy}{silvestri@dei.unipd.it}{https://orcid.org/0000-0002-9077-9921}{}

\authorrunning{E. Costa and F. Silvestri}

\Copyright{Elia Costa \and Francesco Silvestri}

\ccsdesc[500]{Theory of computation~Graph algorithms analysis}
\ccsdesc[500]{Theory of computation~Approximation algorithms analysis}
\ccsdesc[300]{Information systems~Data mining}

\keywords{Mobility data, bike sharing, bike relocation, influence maximization, NP-completeness, approximation algorithm}

%\category{} %optional, e.g. invited paper

%\relatedversion{Updates to the current paper will be made available on arXiv.} %optional, e.g. full version hosted on arXiv, HAL, or other respository/website
%\relatedversiondetails{arXiv link}{https://arxiv.org/abs/2107.00761} %linktext and cite are optional

\supplement{The source code and input graphs used for the experiments are publicly available.}
\supplementdetails{GitHub link}{https://github.com/AlgoUniPD/BikeSpreadingProblem}

\funding{The paper was partially supported by UniPD SID18 grant, PRIN17 20174LF3T8, MIUR "Departments of Excellence".}

\acknowledgements{The authors would like to thank the Municipality of Padova for providing us the dataset with rides of the free-floating service in Padova.
We are also grateful to Pietro Rampazzo for providing us the grids of the city of Padova and useful suggestions on plotting maps.}

\nolinenumbers %uncomment to disable line numbering

\begin{document}
\maketitle
\begin{abstract}
A free-floating bike-sharing system (FFBSS) is a dockless rental system where an individual can borrow a bike and returns it anywhere, within the service area.
To improve the rental service, available bikes should be distributed over the entire service area: a customer leaving from any position is then more likely to find a near bike and then to use the service.
Moreover, spreading bikes among the entire service area increases urban spatial equity since the benefits of FFBSS are not a prerogative of just a few zones.
For guaranteeing such distribution, the FFBSS operator can use vans to manually relocate bikes, but it incurs high economic and environmental costs.
We propose a novel approach that exploits the existing bike flows generated by customers to distribute bikes.
More specifically, by envisioning the problem as an Influence Maximization problem, we show that it is possible to position batches of bikes on a small number of zones, and then the daily use of FFBSS will efficiently spread these bikes on a large area.
We show that detecting these zones is NP-complete, but there exists a simple and efficient $1-1/e$ approximation algorithm;  our approach is then evaluated on a dataset of rides from the free-floating bike-sharing system of the city of Padova.
\end{abstract}

\section{Introduction}
A \emph{bike-sharing system} (BSS) is a service where an individual can rent a bike and return it after a short term.
Nowadays, almost all large cities have adopted a BSS as it is a sustainable transportation system that helps improving air pollution, public health, and traffic congestion \cite{ZHANG2018296}.
The first BSS dates back to the 1960s (with Witte Fietsen in Amsterdam), and we have seen an explosion of BSSs in the last decade, with now about 2000 operators currently managing more than 9.7 million bikes around the world \cite{bike2020}. 
The majority of BSSs are now connected to IT systems that allow to borrow bikes from a smartphone and to collect data on users and rides.
Similar systems exist for e-bikes, scooters, mopeds, and cars.

There are two major approaches to BSSs: station-based and free-floating systems. 
A \emph{station-based bike-sharing system} (SBBSS) represents the most common approach, where a user borrows a bike from a dock and returns it at another dock belonging to the same system; in an SBBSS, the set of origins and destinations of all rides is small and coincides with dock positions.
On the other hand, a \emph{free-floating bike-sharing system} (FFBSS) is a sharing model with no docks: each bike has an integrated lock that can be opened on demand with a smartphone app and returned just by closing the lock. As there are no more fixed docks, bikes can be positioned anywhere (hopefully, respecting traffic codes) and the origin/destination of a ride can be any position within the service area.
%; moreover, as there are no docking stations, FFBSS has lower startup costs than SBBSS.
FFBSS is also an interesting solution for the first and last kilometer problem in multimodal transportation cities \cite{pal} since the average walking distance of a user to the closest free-floating bike is shorter than SBBSS.

The distribution of bikes in the service area is crucial
for to increase user satisfaction. 
In SBBSS, the main challenge is dealing with hotspots, specifically zones where several rides start (sources) or end (sinks), such as train stations or university campuses. Hotspots are critical: sources and sinks might suffer, respectively, from the lack of available bikes and parking slots in a deck. The operator needs to detect and manage these hotspots: bikes should be collected from sinks and repositioned on sources.
The number of hotspots is usually small and several efficient computational strategies have been investigated (see e.g. references in \cite{pal}).
In FBBSS, hotspots are still critical although 
customers do not experience the lack of empty slots in the returning docks as in SBBSS.

In addition to dealing with hotspots, 
FBBSS needs also to guarantee that the entire service area is covered by bikes so that a user leaving from any point can find a near available bike.
A study \cite{KBG19} has indeed shown that every additional meter of walking to a shared bike decreases a user's likelihood of using a bike by 0.194\% for short distances ($\leq 300 m$) and 1.307\% for long distances ($>300m$), implying that a user walking a distance $>500m$ for reaching the closest bike is highly unlikely to use the system. 
Moreover, if bikes are well distributed all over the service area, the spatial equity improves since the benefits of the service are not a prerogative of just some zones (e.g., city center) \cite{MOONEY201991}.
%The above challenge can be formalized as follows: 
Assume the service area to be split into zones (e.g., quadrants) where the diameter of each zone is considered a reasonable walking distance (e.g., $\leq 500$m): then, the desired goal is that each zone has a sufficient number of bikes. 

To distribute bikes over the service area, an FFBSS operator could employ a fleet of vans to manually position bikes in each zone: however, this solution might be economically and environmentally unfeasible due to a large number of zones. 
In this paper, we provide a novel and alternative approach that distributes bikes over the service area by exploiting the existing customers' bike flows and hence reducing intervention by the FFBSS operator.
The idea is to detect a small number of zones, named \emph{seeds}, where bikes are more likely to be spread over the service area by the regular activity of customers during a given time interval.
The seeds represent the positions where the FFBSS operator can position batches of bikes, which will then be spread over the entire area by customers, without further interventions from the operator.
Formally, we modeled this approach as a variant of the Influence Maximization (\im) problem, which we name the  \emph{Bike Spreading (\bs) problem}.

To detect these seeds, we first define a weighted graph representing mobility flows: nodes are zones, and weighted edges represent the probability that a bike moves from one node to another. 
Then, we introduce a diffusion model to analyze how bikes move and a spread score to evaluate the quality of the final bike distribution. 
Finally, we detect a small subset of nodes that maximizes the spread score and position bikes on these nodes.
This formulation results in an NP-complete problem, but we show that there exists a $1-1/e$-approximation algorithm to the problem thanks to some properties (e.g., submodularity and monotonicity).

%\subsection{Our results}
More specifically, the results provided in the paper are the following:
\begin{itemize}
\item In \secref{formalization}, we introduce the Bike Spreading problem and formalize two versions called \tbs and \sbs: the \tbs version aims at maximizing the number of zones with a minimum amount $\gamma>0$ of bikes, while the goal of \sbs is to uniformly distribute bikes in the service area.
\item In \secref{theory}, we analyze the theoretical properties of \tbs and \sbs: we show their NP-completeness and that \sbs satisfies the monotonicity and submodularity properties. 
By these properties and the result by Nemhauser et al. \cite{approx}, we get a simple greedy algorithm providing a $1-1/e$-approximation for \sbs.
\item In \secref{experiments}, we experimentally investigate the \bs problem by using data from the free-floating bike service of the city of Padova (Italy). We analyze the performance and quality of the solution of the greedy algorithm, compare the \sbs and \tbs problems, and make some empirical considerations on the \bs problem. 
\end{itemize}

\section{Preliminaries}

\subsection{Free-floating bike-sharing service} \seclab{review}
The studies related to bike-sharing systems mainly involve two topics: demand prediction and rebalancing. Demand analysis involves the understanding of user behavior and providing the most appropriate service (e.g., \cite{LIN2018258}).
Rebalancing has mainly focused on station-based systems (see e.g. \cite{li,forma,wang,pal}), while only a few works have addressed free-floating bikes.
Reiss and Bogenberger \cite{REISS2016341}  investigated the 
relocation strategy and a validation method on Munich’s FFBSS; their approach focused on finding the best zones where to relocate bikes to satisfy user demand and minimize bikes' idle time.
Pal and Zang \cite{pal} and Usamaa et al. \cite{usama} focused on finding the best route of relocation vans under costs and time constraints; in \cite{usama},   picking up faulty bikes was also included.
Finally, Caggiani et al. \cite{CAGGIANI2018159} proposed a forecast model aiming at reducing the number of times when a zone has fewer bikes than necessary.
%Similarly to \cite{CAGGIANI2018159}, our model aims at minimizing the walking distance and hence to widely spread bikes. 
These works focused on computing an optimal route to collect and relocate bikes, and on detecting hotspots where to relocate more bikes.
To the best of our knowledge, no previous works have studied how to spread bikes over the entire service area and exploited the mobility graph as in our work.

\subsection{The Influence Maximization problem} \label{sec:inf}
The \emph{Influence Maximization} (\im) problem \cite{kempe} is widely used in social network analysis to detect the most influential users that can efficiently spread information, like news or advertisements; \im is also used in epidemiology for analyzing how infections evolve via human interactions.
However, differently from news and infections, bikes do not replicate: we then need to redefine the \im framework to deal with a fixed amount of objects spreading over the graph.
In general, an Influence Maximization problem consists of three components:
\begin{itemize}
\item A directed and weighted graph where vertexes represent users, and edges represent the paths where information can propagate from a given node.
\item A diffusion model that describes how information moves in the graph. The model specifies the initial status (e.g., the nodes that initially contain the information) and how information distributes. The model advances in steps: in each step, a node detects which other nodes in the neighborhood will receive the information and propagates it.
\item An evaluation function $\sigma(\Upsilon)$, which receives a description $\Upsilon$ of how the information is distributed in the graph and provides a non-negative real value. Large values denote better and more desired distributions; e.g., $\sigma(\Upsilon)$ can represent the total number of nodes that have seen some information.
\end{itemize}
The goal of \im is to find an initial distribution that maximizes the evaluation function $\sigma(\Upsilon)$ after a given number of steps. 

The diffusion model is the critical part of \im problems, indeed they delineate the way nodes influence each other. The most commonly used diffusion models \cite{chen2011influence,survey} are {Independent Cascade}, {Linear Thresholds}, {Triggering}, and {Time Aware}: these model are {progressive} models, in the sense that once a node has been influenced it cannot change its status. There are also {non-progressive} diffusion models: some examples are the Susceptible-Infected-Susceptible (SIS) models \cite{sis}, widely used in epidemiology. 

The bike spreading problem can be viewed as a \im problem where the diffusion model propagates objects (i.e., bikes).
The main difference is that objects cannot be replicated and their amount is constant. 
In contrast, previous works on information or infection diffusion allow replicas: for instance, an infected person can infect many other persons and an image can be shared with friends in a social network.

\subsection{Approximating submodular functions} \seclab{submodular}
Nemhauser et al. \cite{approx} proved that a non-negative, monotone submodular function can be efficiently approximated by a simple greedy algorithm, within a factor $1-1/e\sim 0.63$. 
This result is used by Kempe et al. \cite{kempe} to obtain an approximate solution for the \im problem under both {Independent Cascade} and {Linear Threshold} models.

Consider a function $f$ mapping a set of elements in $U$ to a non-negative real value, i.e. $f:U^*\rightarrow \mathbb{R}^+$. 
Intuitively, a function $f$ is monotone if, by expanding a given input set, the value of the function does not decrease; a function $f$ is submodular if the marginal gain does not increase when adding more elements to the input set. 
%More formal definitions are:
\begin{definition}[Monotonicity]\deflab{mono}
Given a function $f:U^*\rightarrow \mathbb{R}^+$ where $U$ is a set of elements, function $f$ is said to be \emph{monotone} if, for every $S\subseteq U$ and $v\in U$, we have 
$f(S \cup \{v\} ) \geq f(S)$.
\end{definition}

\begin{definition}[Submodularity]\deflab{submod}
Given a function $f:U^*\rightarrow \mathbb{R}^+$ where $U$ is a set of elements, function $f$ is said to be \emph{submodular} if, for every $S,T\subseteq U $ with $S \subseteq T$ and $v\in U\setminus T$, we have 
$
f(S \cup \{v\}) - f(S) \geq f(T \cup \{v\}) - f(T).$
\end{definition}
The aforementioned result by Nemhauser et al. \cite{approx} is the following:
\begin{theorem}[\cite{approx}]\thmlab{approx}
Let $f:U^*\rightarrow \mathbb{R}^+$ be a non-negative, monotone submodular function where $U$ is a set of elements.
Let $k\geq 1$ be a given value and $S^*$ be a set of $k$ elements in $U$ maximizing the value of $f$ among all sets of size $k$.
Then, there exists a greedy algorithm that returns a set $S$ of $k$ elements such that $f(S)\geq 1-(1-1/k)^k f(S^*) \geq (1-1/e) f(S^*)$, where $e$ is the base of the natural logarithm.
\end{theorem}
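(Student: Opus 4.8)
The plan is to analyze the natural greedy algorithm that builds the solution one element at a time: starting from $S_0=\emptyset$, at each step it appends the element of largest marginal gain, i.e.\ $S_{i+1}=S_i\cup\{v_{i+1}\}$ where $v_{i+1}\in\argmax_{v\in U\setminus S_i}\left(f(S_i\cup\{v\})-f(S_i)\right)$, and it returns $S=S_k$. The goal is to show that each greedy step closes a constant fraction of the gap to the optimum $f(S^*)$, so that after $k$ steps the stated bound follows.

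The first and central step is to bound $f(S^*)$ from above using only the marginal gains measured at the current greedy set $S_i$. Fix $i$ and write the elements of $S^*\setminus S_i$ as $w_1,\dots,w_m$ in an arbitrary order. By monotonicity (\defref{mono}) we have $f(S^*)\le f(S^*\cup S_i)$, and by telescoping the right-hand side along $w_1,\dots,w_m$ and applying submodularity (\defref{submod}) term by term — the incremental gain of adding $w_j$ on top of $S_i\cup\{w_1,\dots,w_{j-1}\}$ is at most the gain of adding $w_j$ directly to $S_i$ — we obtain
\[
f(S^*)-f(S_i)\le\sum_{v\in S^*}\left(f(S_i\cup\{v\})-f(S_i)\right),
\]
where terms with $v\in S_i$ contribute $0$. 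This inequality is the crux of the argument and the place where both hypotheses are genuinely used.

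Next I would invoke the greedy choice: since $v_{i+1}$ maximizes the marginal gain, every term in the sum above is at most $f(S_{i+1})-f(S_i)$, and because $|S^*|=k$ there are at most $k$ such terms, giving $f(S^*)-f(S_i)\le k\left(f(S_{i+1})-f(S_i)\right)$. Setting $\delta_i=f(S^*)-f(S_i)$ and rearranging turns this into the recurrence $\delta_{i+1}\le(1-1/k)\,\delta_i$. Iterating from $\delta_0=f(S^*)-f(\emptyset)\le f(S^*)$ (using non-negativity $f(\emptyset)\ge 0$) yields $\delta_k\le(1-1/k)^k f(S^*)\le f(S^*)/e$, which is exactly $f(S)=f(S_k)\ge(1-1/e)f(S^*)$.

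The main obstacle is the first step — establishing the upper bound on $f(S^*)$ in terms of the per-element greedy marginals. The telescoping decomposition must be set up so that submodularity applies to each summand with the correct nested sets, and monotonicity is needed both to pass from $f(S^*)$ to $f(S^*\cup S_i)$ and to guarantee that the marginal gains stay non-negative, so that padding the sum from $S^*\setminus S_i$ up to all of $S^*$ is legitimate and the recurrence behaves as claimed. Once this inequality is in hand, the remaining telescoping recurrence and the estimate $(1-1/k)^k\le 1/e$ are routine.
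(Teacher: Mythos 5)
Your argument is correct and complete: it is the standard Nemhauser--Wolsey--Fisher proof, with the key inequality $f(S^*)-f(S_i)\le\sum_{v\in S^*}\bigl(f(S_i\cup\{v\})-f(S_i)\bigr)$ obtained by telescoping over $S^*\setminus S_i$ and applying submodularity termwise, followed by the recurrence $\delta_{i+1}\le(1-1/k)\delta_i$ and the bound $(1-1/k)^k\le 1/e$. Note, however, that the paper does not prove this theorem at all --- it imports it verbatim from Nemhauser et al.\ and only records the greedy pseudocode and its running time --- so there is no in-paper proof to compare against; your write-up simply supplies the classical argument the citation points to. One cosmetic mismatch worth flagging: the paper's pseudocode takes the $\argmax$ over all $u\in U$ rather than over $U\setminus S_i$ as you do; your restriction is the cleaner choice, since the unrestricted version can in principle re-select an element of $S_i$ on a tie at marginal gain zero and return fewer than $k$ distinct elements, whereas your analysis is unaffected by this because the padded terms for $v\in S_i$ contribute exactly zero.
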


The greedy algorithm is quite simple and it consists of extending a given set $S$ with the node $u^*$ that maximizes the marginal gain, i.e., $u^* = \argmax_{u \in U}\{f(S \cup\{u\}) - f(S)\}$ until we get a set of $k$ elements. 

\begin{lstlisting}[mathescape]
greedy($f,\; k$)
 $S = \emptyset$
 for $i = 0$ to $k-1$
 	$u^* = \argmax_{u \in U}\{f(S \cup\{u\}) - f(S)\} $ 
 	$S = S \cup \{u^*\} $
 return $S$
\end{lstlisting}

The algorithm takes time $\BO{k \Delta}$ where $\Delta$ is the maximum cost of finding the element of $U$ that maximizes the marginal gain.

\section{The Bike Spreading problem} \seclab{formalization}
In this section we formalize the Bike Spreading problem.
We first propose a general definition based on the Influence Maximization problem, and then we consider two special cases, named \sbs and \tbs problems, that target specific bike distributions.
%In the subsequent \secref{theory}, we analyze the theoretical properties and derive an approximation algorithm for both versions of the problem.

The Bike Spreading (\bs) problem can be viewed as a special case of the Influence Maximization problem, where the entity that is distributed across the graph are items that cannot be replicated, differently from news and infections.\footnote{The model presented in our paper is not atomic: we allow a bike to be "split" since the value of a node should be understood as the expected number of bikes in that node.} We represent a city as a \emph{directed} graph $G=(V,E)$: $V$ is the set of nodes, where every node represents a different zone of the city; $E$ is the set of directed weighted edges capturing the probability of moving from one zone to another. We define $n=|V|$ and $m=|E|$, and let $\Gamma^{IN}(v)= \{w \in V | (w,v) \in E\}$ and $\Gamma^{OUT}(v)=\{w \in V | (v,w) \in E\}$ denote the set of nodes for which $v$ is the destination or the source, respectively.
The weight of an edge $e =(u,v) \in E$ represents the \emph{probability} $p_e$ that a bike in node $u$ moves from $u$ to $v$: intuitively, $p_e$ is the probability that an user renting a bike in zone $u$ ends her/his ride into zone $v$.
We use self-loops to represent bikes that stay in the same node (i.e., for bikes that are not used or are borrowed for a ride starting and ending in the same node). Then, for each node $u \in V$, we have:
\begin{equation}\eqlab{unity}
\sum_{v\in \Gamma^{OUT}(u)} p_{(u,v)} =1.
\end{equation}

We assume bikes to be initially positioned on $k\geq 1$ nodes in groups of $L\geq 1$ bikes, and we refer to the initial set of these $k$ nodes as \emph{$(k,L)$-seed}. 
Bikes can be damaged or stolen in a free-floating bike system, however, this does not significantly affect the total number of bikes in the system: therefore, we assume the total number of bikes in the graph to be fixed, and we let $B$ denote the total number of bikes, i.e. $B=k\cdot L$.

The diffusion model unfolds in $\tau\geq 1$ discrete steps. 
At any step $1 \leq t \leq \tau$ and for each vertex $v\in V$, we define the \emph{load} $\load{v,t,\S}\geq 0$ 
which represents the (expected) number of bikes in node $v$ after $t$ steps starting from a $(k,L)$-seed set $\S$.
%For the sake of notational simplicity, when clear from the context, we drop the subscripts $k,L$ in $\ell_{k,L}(v,t,\S)$.
We let $\load{v,0,\S}$ denote the initial loads: $\load{v,0,\S}=L$ for each $v\in \S$, and $\load{v,0,\S}=0$ otherwise.
At any time $0 < t\leq \tau$, bikes move according to edge directions and probabilities; each vertex load $\load{v,t,\S}$ is updated as follows:
\begin{equation*}
\load{v,t,\S}= \load{v,t-1,\S} + \Delta^{IN}(\S) - \Delta^{OUT}(\S)
\end{equation*} 
where the two rightmost terms are defined as
\begin{align*}
\Delta^{IN}(\S)=\hspace{-0.3em} \sum_{u \in \Gamma^{IN}(v)}{p_{(u,v)} \load{u,t-1,\S}}, \quad
\Delta^{OUT}(\S) = \hspace{-0.3em} \sum_{u \in \Gamma^{OUT}(v)} {p_{(v,u)} \load{v,t-1,\S}}.
\end{align*} 
Intuitively, the bikes in a node $v$ are partitioned among the outgoing edges according to their probabilities, and then the load of $v$ is decreased by the number of bikes leaving the node (i.e., $\Delta^{OUT}(\S)$) and increased by the number of bikes entering in the node (i.e., $\Delta^{IN}(\S)$).
As $ \sum_{u \in \Gamma^{OUT}(v)} p_{(v,u)} = 1$ due to  self-loops, the above update can be rewritten as follows:
\begin{align*}
\load{v,t,\S}= \sum_{u \in \Gamma^{IN}(v)}{p_{(u,v)} \load{u,t-1,\S}}.
\end{align*} 
%The procedure ends after $\tau$ steps. 
We let $\loads{t,\S}=\left(\load{v_0,t,\S},\ldots, \load{v_{n-1},t,\S}\right)$ denote the loads of all nodes in $G$ after $1\leq t\leq \tau$ steps and with the $(k,L)$-seed $\S$.
We remark that $\load{v,t,\S}$ represents the expected number of bikes in node $v$ after $t$ steps, where all bikes in a node are spread among the neighbors nodes in every time step uniformly at random according to edge probabilities.

To measure the quality of bike distribution among nodes, we introduce the notion of spread. The \emph{spread} $\spread{\loads{t,\S}}$ after $t$ steps and with $(k,L)$-seed $\S$ is a function 
$\mathbb{R}^{n}\rightarrow \mathbb{R}$ that evaluates the quality of loads $\loads{t,\S}$. 
The actual formulation of the spread function depends on the desired distribution in a free-floating bike system.
We will see two examples of spread aiming at maximizing the number of nodes with a given minimum load, and at uniformly distributing bikes among all nodes of the graph.
We are now ready to define the {Bike Spreading problem} as:
\begin{definition}
Given a directed graph $G = (V,E)$, positive integers $L\geq 1$, $k\geq 1$ and $\tau\geq 1$, and spread $\spread{\cdot}$, 
the \emph{Bike Spreading (\bs) problem} asks for the $(k,L)$-seed $\S^*$ that maximizes $\spread{\loads{\tau,\S}}$.
More specifically, the mathematical formulation of the problem is $$
\S^*=\argmax_{\S\subseteq V, |\S| = k} \spread{\loads{\tau,\S}} $$
such that:
\begin{itemize}
\item $\displaystyle \load{v,0,\S} = L, \quad \forall v\in \S$
\item $\displaystyle \load{v,0,\S} = 0, \quad \forall v \in V\setminus \S$
\item $\displaystyle \load{v,t,\S} = \sum_{u \in \Gamma^{IN}(v)}{p_{(u,v)} \load{v,t-1,\S}}$, $\quad\forall v\in V$ and $t=1,\ldots,\tau $
\end{itemize} 
\end{definition}

We remark that in this paper we focus on small values of $\tau$ (i.e., the maximum number of steps) and thus we do not analyze the convergence of the spread for $\tau\rightarrow +\infty$. As the distribution should happen within a couple of hours from the positioning of bike batches, we expect each bike to be used just a few times and thus $\tau \leq 5$ from a practical point of view.

We now describe the two spread functions used in the paper, namely \tbs and \sbs.

\paragraph*{\tbs version}
The first example of spread leverages on the idea that a zone is considered well served by the free-floating bike system if there are at least $\gamma$ bikes (in expectation), where $\gamma>0$ is a given threshold value.
Therefore, the spread counts the number of nodes (i.e., zones) that have at least $\gamma$ bikes:
\begin{align}
\tspread{\loads{\tau,\S}} =\left|\{v \in V | \load{v,\tau,\S} \geq \gamma\} \right| \eqlab{obj0} 
\end{align}
We refer to the \bs formulation with the spread in \Eqref{obj0} as \emph{Threshold BS (\tbs)} problem.

\paragraph*{\sbs version}
The \tbs problem is an all-or-nothing approach, where only nodes receiving a sufficiently large number of bikes are relevant.
However such a solution would not penalize skewed distributions: for instance, \tbs gives the same score to a distribution of $n$ nodes with load $2\gamma$ and a skewed distribution of $n-1$ nodes with load $\gamma$ and one node with load $(n+1) \gamma$. 
Therefore, if the goal is to maximize the area covered by the service, the ideal distribution should be the uniform distribution with $B/n$ bikes per node.
We thus introduce the following spread:
\begin{equation}
\sspread{\loads{\tau,\S}} = \sum_{v\in V} \sqrt{\load{v,\tau}} \eqlab{obj1} 
\end{equation}
The maximum value of $\sspread{\loads{\tau,\S}}$ is reached when bikes are equally distributed over the graph, that is $\load{v,\tau}=B/n$.
%, although this might not be possible due to the connectivity of the graph.
We refer to the \bs formulation with the spread in \Eqref{obj1} as \emph{Uniform BS (\sbs)} problem.

\section{Theoretical analysis}\seclab{theory}
In this section, we first prove the NP-completeness of \tbs and \sbs problems.
Then we show that \sbs satisfies the monotonicity and submodularity properties and hence, by the result of Nemhauser et al. \cite{approx}, there exists a greedy algorithm providing a $(1-1/e)$-approximate solution to \sbs. 

\subsection{NP-completeness}
\paragraph*{\tbs is NP-complete.}
We prove the NP-completeness of \tbs with a reduction from the \textit{Minimum Dominating Set} (\textit{\mds}) problem which is known to be NP-complete even in graphs with constant vertex degree $d \geq 3$ \cite{3DS,APX_dominating}. 
Let $G = (V, E)$ be a simple, connected, undirected graph with degree $d \geq 3$ for all nodes in $V$.
A subset $S \subseteq V$ is a \emph{dominating set} if for every vertex $v \in V \setminus S$, there exists a vertex $u \in S$ such that $(v,u) \in E$; that is, every vertex outside $S$ has at least one neighbor in $S$. 
A minimum domination set of $G$ is a domination set of the smallest possible size, and we refer to its size with \emph{domination number $\gamma(G)$}.
Since the degree of every node in $G$ equals $d$, we have $\gamma(G) \leq n-d+1$, because in a set $S \subseteq V$, $|S| \geq n-d+1$, every node has at least one neighbor in $S$. 

The decision problem associated with \mds is defined as:
\begin{itemize}
\item INSTANCE: a undirected graph $G=(V,E)$ with fixed degree $d\geq 3$ and an integer $k$ with $0 < k < n-d+1$.
\item QUESTION: Does a dominating set with $\gamma(G)\leq k$ exist?
\end{itemize}

The \tbs problem can be represented by the following decision problem:
\begin{itemize}
\item INSTANCE: a directed and weighted graph $G=(V,E)$ that satisfies \Eqref{unity}, integers $k$, $L$, $\gamma$, $\tau$ and $\lambda$ with $0 < k < n$, $L\geq 1$, $\gamma>0$, $\tau\geq 1$, and $0< \lambda \leq n$.
\item QUESTION: Does a $(k,L)$-seed $\S \subseteq V$ exist in $G$ such that $\tspread{\loads{\tau,\S}} \geq \lambda$?
\end{itemize}

The reduction of \mds to \tbs is the following. 
Given an instance of the \mds problem on a graph $G$ with degree $d\geq 3$ and integer $k$, we construct a new weighted and directed graph $G'$ for the \tbs problem by directing all edges of $G$ in both directions and by adding a self-loop to each vertex, obtaining a new graph $G'=(V,E')$ with outdegree $d' = d+1$. The probability of each edge is then set to $p_e = {1}/{d'}$.
We then solve the \tbs problem on $G'$ with 
a seed with size $k$ and $L=d'$ bikes in each vertex in the seed, threshold $\gamma=1$, step number $\tau = 1$, and $\lambda = n$.
The reduction returns yes to the $d$-\mds problem if and only if \tbs returns yes, that is if there exists a $(k,d')$-seed set $\S$ in $G'$ with  $\ttspread{1}{\lloads{k,d'}{1,\S}} \geq n$. 

\begin{theorem}\thmlab{tbsNP}
The \tbs problem is \textit{NP-complete}.
\end{theorem}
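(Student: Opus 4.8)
The plan is to prove both membership in NP and NP-hardness, the latter via the reduction from \mds set up above. For membership, I would take a candidate $(k,L)$-seed $\S$ as the certificate: given $\S$, the loads $\loads{t,\S}$ for $t=1,\ldots,\tau$ can be computed iteratively from the update rule $\load{v,t,\S}=\sum_{u\in\Gamma^{IN}(v)}p_{(u,v)}\load{u,t-1,\S}$, each step being a sparse matrix--vector product costing $\BO{m}$ arithmetic operations, for a total of $\BO{\tau m}$. One then counts the nodes with $\load{v,\tau,\S}\geq\gamma$ and checks whether this count is at least $\lambda$; assuming rational edge probabilities, the loads stay rational of polynomial bit-length, so the check runs in polynomial time and \tbs lies in NP.

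For NP-hardness I would verify that the stated reduction is correct, namely that $G$ has a dominating set of size at most $k$ if and only if the constructed instance on $G'$ admits a $(k,d')$-seed $\S$ with $\tspread{\loads{1,\S}}\geq n$. The crux is the load after a single step. Since every edge of $G'$ carries probability $1/d'$, only seed vertices start with nonzero load $\load{u,0,\S}=d'$, and $\Gamma^{IN}(v)$ in $G'$ is exactly $v$ (via its self-loop) together with the $G$-neighbors of $v$; hence the update rule gives
\begin{equation*}
\load{v,1,\S}=\sum_{u\in\Gamma^{IN}(v)\cap\S}\frac{1}{d'}\cdot d'=\left|\Gamma^{IN}(v)\cap\S\right|,
\end{equation*}
the number of vertices of $\S$ equal or adjacent to $v$ in $G$. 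With $\gamma=1$, a vertex thus contributes to the spread if and only if $v\in\S$ or $v$ has a neighbor in $\S$ --- exactly the condition that $v$ is dominated by $\S$. Therefore $\tspread{\loads{1,\S}}$ counts the vertices dominated by $\S$, and it reaches $\lambda=n$ if and only if every vertex is dominated, i.e. if and only if $\S$ is a dominating set of $G$.

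The one point I would treat carefully is the mismatch between seeds of size \emph{exactly} $k$ in \tbs and dominating sets of size \emph{at most} $k$ in \mds. Here I would exploit the fact that any superset of a dominating set is still dominating: a dominating set of size $k'\leq k$ can be padded with arbitrary vertices up to size exactly $k$ (feasible since $k<n-d+1\leq n$) while remaining dominating, and conversely a size-$k$ seed dominating $G$ is a dominating set of size at most $k$, closing both directions. Since building $G'$ and fixing $L=d'$, $\gamma=1$, $\tau=1$, $\lambda=n$ is polynomial in $|G|$ and \mds stays NP-complete even on constant-degree graphs, the reduction is polynomial; combined with NP membership, this gives NP-completeness of \tbs.
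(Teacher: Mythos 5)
Your proposal is correct and follows essentially the same route as the paper: membership via direct polynomial-time computation of the loads, and hardness via the identical reduction from \mds with $L=d'$, $p_e=1/d'$, $\gamma=1$, $\tau=1$, $\lambda=n$, using the observation that $\load{v,1,\S}$ counts the seed vertices dominating $v$. Your explicit handling of the ``exactly $k$ versus at most $k$'' padding issue is in fact slightly more careful than the paper's own write-up, which passes over it silently.
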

\begin{proof}
The \tbs problem is in NP since a solution can be  verified in polynomial time as follows.
Let $A$ be the adjacency matrix of a graph $G$, then
matrix $A^\tau$ represents the percentage of bikes in a node $u$ that  reach a node $v$, for each $u,v\in V$, using paths of length $\tau$ (including possibly self-loops). Matrix $A^\tau$  can be computed in $\BO{n^3 \log \tau}$ time with the doubling trick.
Given a seed $\S$ and an $n$-dimensional vector $\ell_{\S}$ encoding the initial loads, then the loads at step $\tau$ can be computed with $A^\tau \cdot \ell_{\S}$ in $\BO{n^2}$ time.
Therefore $\tspread{\loads{\tau,\S}}$ for a given seed can be computed in $\BO{n^3 \log \tau}$ time.

We now show the correctness of the reduction. 
We first prove that, if there exists a dominating set $\S$ of size $k'\leq k$ in $G$, then there exits a $(k,d')$-seed in $G'$ giving $\ttspread{1}{\lloads{k,d'}{1,\S}} \geq n$. 
Let us assume for simplicity that $\S$ has size $k'=k$: it suffices to add nodes to $\S$ till reaching size $k$.
Since there are $L = d'$ bikes in each node of the seed set, $p_e = 1/d'$ and there are exactly $d'$ outgoing edges from each node, we have that each outgoing edge of a node in $\S$ is crossed by one bike. 
By definition, the dominating set $\S$ covers each vertex in $V\setminus \S$ with at least one edge; moreover, each node in $\S$ has a self-loop.
Therefore, each node in $V$ receives at least on bike and hence $\ttspread{1}{\lloads{k,d'}{1,\S}} = n$.

Conversely, if a dominating set $\S$ of size at most $k$ does not exist in $G$, then there cannot be a set $\S'$ of size $k$ in $G'$ with $\ttspread{1}{\lloads{k,d'}{1,\S}} \geq n$.
Indeed, since a dominator set of size $\leq k$ does not exists, it means that for any set $\S'$ of $\leq k$ nodes from $V$ there is at least one node $v \in V$ which is not adjacent to nodes in $\S'$. 
Therefore, for any seed set $\S'$ of $k$ nodes there exists a node $v$ not receiving any bike; 
it follows that $\lload{k,d'}{v,1,\S'}=0$ and thus $\ttspread{1}{\lloads{k,d'}{1,\S}} \leq n-1$.
\end{proof}

\paragraph*{\sbs is NP-complete.}
We use a reduction from the \textit{Exact Cover by 3-Sets} (\emph{\xc}) problem, as \mds does not work for \sbs due to its spread function.
\xc is NP-complete \cite{exact_cover}, and consists of a covering problem with sets of three elements.
Given a set $X$ with $|X| = 3q$, for some integer $q\geq 1$, and a collection $C$ of $3$-element subsets of $X$, \xc requires to decide if there exists a subset $C'\subseteq C$ such that $C'$ covers $X$ and every element of $X$ occurs in exactly one set in $C'$ (i.e., $C'$ is an \emph{exact cover} of $X$).
For clarity, consider the following example: let $X=\{1,2,3,4,5,6\}$ and $C = \{\{1,2,3\}, \{2,3,4\},\{1,2,5\}, \{2,5,6\}, \{1,5,6\}\}$;
then, the collection $C' = \{\{2,3,4\},\{1,5,6\}\} \subset C$ is an exact cover because each element in $X$ appears exactly once.
Note that, if $C=\{\{1,2,3\},\{2,4,5\},\{2,5,6\}\}$, then any collection $C'$ cannot be an exact cover: indeed, every pair of sets in $C$ shares at least one entry, and hence every collection $C'$ covers at least one element twice or more.
Note that if we do have an exact cover, $C'$ will contain exactly $q$ elements.

The decision problem associated with \xc is:
\begin{itemize}
\item INSTANCE: a set $X$, with $|X| = 3q$ for some integer $q\geq 1$, a collection $C$ of $3$-element subsets of $X$.
\item QUESTION: does a set $C' \subseteq C$ exist such that every element of $X$ occurs in exactly one member of $C'$?
\end{itemize}

The decision problem for \sbs is:
\begin{itemize}
\item INSTANCE: a directed and weighted graph $G=(V,E)$ that satisfies \Eqref{unity}, integers $k$, $L$, $\tau$, $\lambda$ with $0 < k < n$, $L\geq 1$, $\tau\geq 1$, $\lambda\geq 0$.
\item QUESTION: Does a $(k,L)$-seed $\S \subseteq V$ exist in $G$ such that $\sspread{\loads{\tau,\S}} \geq \lambda$?
\end{itemize}

The reduction from \xc to \sbs is the following.
Given an instance of the \xc, defined by a set of $3q$ elements $X=\{x_1, \ldots , x_{3q}\}$ and a collection $C=\{c_1, .., c_r\}$ of $3$-element subsets of $X$, we build a directed and weighted graph $G=(V,E)$ for the \sbs problem as follows. 
With a slight abuse of notation, we let $V=C\cup X$, 
that is, each element in $X$ and each set in $C$ are represented by a node in $V$.
Then we set $E=E_1 \cup E_2$: $E_1$ contains an edge $(c_i,x_j)$ if $x_j\in c_i$, for each $c_i\in C$ and $x_j\in X$; 
$E_2$ contains a self-loop for each node $x_i\in X$.
We observe that nodes in $X$ have only one outgoing edge, while nodes in $C$ have three outgoing edges: 
then, we set the probability of each self-loop in $E_2$ to $1$, while we set $p_e = {1}/{3}$ for all the remaining  edges in $E_1$.
We then run the \sbs problem on $G$ with $k=q$ nodes in the seed, $L=3$ bikes per node in the seed, $\tau=1$ steps, and $\lambda=3q$, and we answer yes to \xc 
if and only if \sbs returns yes, that is if there exists a $(q,3)$-seed set $\S$ in $G$ with  $\sspread{\lloads{q,3}{1,\S}} \geq 3q$.

\begin{theorem}\thmlab{sbsNP}
The \sbs problem is NP-complete.
\end{theorem}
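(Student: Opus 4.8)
The plan is to establish the two standard ingredients of an NP-completeness proof: membership in NP and correctness of the reduction from \xc. Membership follows exactly as in the proof of \thmref{tbsNP}: a $(k,L)$-seed $\S$ is a polynomial-size certificate, the loads $\loads{\tau,\S}$ are obtained from the seed indicator vector by applying the $\tau$-th power of the weighted transition matrix, and from these the spread $\sspread{\loads{\tau,\S}}$ is computed and compared against $\lambda$. The only delicate point is that the spread is a sum of square roots; for the instances produced by the reduction the loads are integers, so this causes no trouble, and I would remark on it rather than dwell. The substance of the theorem is therefore the correctness of the reduction, which I would prove by pinning down exactly which distributions maximize the spread on the constructed graph $G$.

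First I would compute the loads after the single step $\tau=1$. Since every edge in $E_1$ points from a $C$-node to an $X$-node and the only self-loops lie on $X$-nodes, no $C$-node has an incoming edge; hence $\load{c,1,\S}=0$ for every $c\in C$, and all $B=kL=3q$ bikes reside on the $3q$ element-nodes after one step. Writing $w_j=\load{v_j,1,\S}$ for $v_j\in X$, a direct expansion of the update rule gives $w_j = 3\,[v_j\in\S] + |\{c\in\S\cap C : v_j\in c\}|$: a seed element-node keeps its three bikes through its probability-one self-loop, while each seed set-node sends exactly one bike ($3\cdot\tfrac{1}{3}$) along each of its three outgoing edges. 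In all cases $\sum_j w_j = 3q$, so the spread reduces to $\sspread{\loads{1,\S}}=\sum_{j=1}^{3q}\sqrt{w_j}$, i.e.\ a sum of $3q$ square roots of nonnegative reals whose total is $3q$.

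The key inequality I would then invoke is the concavity of $\sqrt{\cdot}$ (Jensen): for nonnegative $w_1,\dots,w_{3q}$ with $\sum_j w_j = 3q$ one has $\sum_j\sqrt{w_j}\le 3q$, with equality \emph{iff} every $w_j=1$. Combined with the integrality of the $w_j$, this settles both directions. For the forward direction, if $C'\subseteq C$ is an exact cover then $|C'|=q=k$, and taking $\S$ to be the corresponding set-nodes gives $w_j=1$ for every element, so $\sspread{\loads{1,\S}}=3q=\lambda$ and \sbs answers yes. For the converse, suppose a size-$q$ seed $\S$ attains $\sspread{\loads{1,\S}}\ge\lambda=3q$; by the inequality this forces $w_j=1$ for all $j$. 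Since $w_j=3\,[v_j\in\S]+b_j$ with $b_j=|\{c\in\S\cap C:v_j\in c\}|\ge 0$ an integer, $w_j=1$ is impossible when $v_j\in\S$ (that would give $w_j\ge3$); hence $\S\cap X=\emptyset$, so $\S$ consists of $q$ set-nodes, and $b_j=1$ for every $j$ says each element is covered exactly once, i.e.\ $\S$ is an exact cover.

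The main obstacle, and the reason the reduction uses the square-root spread rather than the threshold spread, is handling seeds that are \emph{not} honest exact covers: seeds that place bikes directly on element-nodes, or that cover some element twice and therefore leave another uncovered. This is exactly where strict concavity does the work, since any deviation from the all-ones profile strictly lowers $\sum_j\sqrt{w_j}$ below $3q$ and so cannot reach $\lambda$. I would make sure to state the equality case of Jensen precisely and to use the integrality of the $w_j$ to rule out fractional near-uniform profiles, as these two facts together are what make the reduction tight.
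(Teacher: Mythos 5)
Your proposal is correct and follows essentially the same route as the paper: the identical reduction from \xc, NP membership via the matrix-power computation, and strict concavity of the square root to force the all-ones load profile in the converse direction. If anything, your argument is slightly more careful than the paper's, since you explicitly rule out seeds containing element-nodes (via the integrality of the loads and $w_j\geq 3$ for a seeded element), a degenerate case the paper's proof leaves implicit.
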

\begin{proof}
\sbs is in NP as a solution can be checked in polynomial time $\BO{n^3\log \tau}$ as shown in the proof of \thmref{tbsNP}.

We now prove the correctness of the reduction. We first observe that any exact cover for \xc must contain $q$ entries from $C$, otherwise $X$ is not covered or an element in $X$ is covered by more than one set in $C'$.
We now prove that, if the \xc problem contains an exact cover $C'$, then there exists a $(q,3)$-seed $\S$ in $G$ with $\sspread{\lloads{q,3}{1,\S}} \geq 3q$.
Let  $\S$ be a seed set given by the nodes $c_i$  representing sets in $C'$. As each node $x\in X$ is covered by exactly one node in $C'$ and the outgoing degree of a node in $C'$ is 3, we have that $x$ receives one bike after the first step. Then:
$
\sspread{\lloads{q,3}{1,\S}} = \sum_{v\in V} \sqrt{\lload{q,3}{v,1,\S}} 
= \sum_{v \in C} 0+ \sum_{v \in X} 1 = 3q.
$

Assume now that \xc does not have an exact cover: then any $C'$ of $q$ sets from $C$ does not cover at least one point of $X$ and covers at least one point of $X$ more than once. 
Assume by contradiction that \sbs returns a seed $\S$ of $k=q$ nodes with $\sspread{\lloads{q,3}{1,\S}} \geq 3q$.
We claim that $\S$ has no nodes in $X$. If $\S$ has a node  $x \in X$, then the load of $x$ after one step is $h=\lload{q,3}{x,1,\S}=\sqrt{3}$ since every node in $X$ has only the self-loop as outgoing edge. Since  bikes positioned in nodes of $C'$ move in nodes of $X$ after one step, we get:
$$
\sspread{\lloads{q,3}{1,\S}} 
= \sqrt{h} + \sum_{v \in X\setminus \{x\}} \lload{q,3}{v,1,\S}
.$$
By the concavity of the square root, the right summation is maximized when all loads are equal  and, since $|X|=3q$, 
we get
$\sspread{\lloads{q,3}{1,\S}}\leq \sqrt{h} + \sqrt{(3q-1)(3q-h)}$. 
The right term of the inequality decreases for $h>1$, and then
$\sspread{\lloads{q,3}{1,\S}}<3q$, which is a contradiction.
Therefore, we must have that $\S$ contains only nodes in $C$.
However, since there is no exact cover, at least one node in $X$ must receive two or more bikes: by mimic the previous argument, we get that $\sspread{\lloads{q,3}{1,\S}} < 3q$. Therefore there is no $(q,3)$-seed 
giving $\sspread{\lloads{q,3}{1,\S}} \geq 3q$.
\end{proof}

\subsection{Approximation algorithms}
By the previous hardness results, we do not expect polynomial-time exact algorithms for the \tbs and \sbs problems.
In this section, by showing that \sbs satisfies the monotonicity and submodularity properties, we get that the greedy solution in \secref{submodular} gives a $(1-1/e)$-approximation algorithm for \sbs.
The \tbs version does not satisfy the submodularity property and thus similar theoretical guarantees cannot be provided: however, in the following section, we show that the greedy algorithm  experimentally provides a good approximation even for \tbs.

We start with a technical lemma and then show that  \sbs satisfies the monotonicity and submodularity properties.
 
\begin{lemma}\lemlab{sub1}
Let $S$ and $T$ be two seed sets with $S \subseteq T \subseteq V$, then $\load{v,t,T} \geq \load{v,t,S}$ for each $v \in V$ and $t\geq 0$. \end{lemma}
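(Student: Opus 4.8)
The plan is to prove the statement by induction on the step count $t$, exploiting the fact that the load update rule is a linear combination of previous loads with non-negative coefficients.

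For the base case $t=0$, I would use the initial load conditions directly, splitting $V$ into three regions according to the inclusion $S\subseteq T$. For $v\in S$ we have $\load{v,0,S}=L=\load{v,0,T}$ (since $v\in S$ implies $v\in T$); for $v\in T\setminus S$ we have $\load{v,0,S}=0\leq L=\load{v,0,T}$; and for $v\in V\setminus T$ we have $\load{v,0,S}=0=\load{v,0,T}$ (using that $v\notin T$ forces $v\notin S$). In every case the desired inequality holds, so the base case is settled.

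For the inductive step, assume the inequality $\load{u,t-1,T}\geq \load{u,t-1,S}$ holds for every node $u\in V$. Applying the simplified update rule
\begin{equation*}
\load{v,t,\S}= \sum_{u \in \Gamma^{IN}(v)}{p_{(u,v)} \load{u,t-1,\S}},
\end{equation*}
I would subtract the two expressions and observe that the difference equals $\sum_{u \in \Gamma^{IN}(v)} p_{(u,v)}\bigl(\load{u,t-1,T}-\load{u,t-1,S}\bigr)$. Since each probability $p_{(u,v)}\geq 0$ and each bracketed term is non-negative by the inductive hypothesis, the whole sum is non-negative, giving $\load{v,t,T}\geq \load{v,t,S}$. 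This closes the induction for all $v\in V$ and all $t\geq 0$.

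There is no genuinely hard step here: the argument is a routine monotone-propagation induction. The only point requiring care is the base case bookkeeping, namely correctly handling the middle region $T\setminus S$ where the two seeds differ, and relying on $S\subseteq T$ so that membership implications go the right way. The essential structural fact making the induction go through is that the diffusion operator is \emph{linear with non-negative coefficients}, so it preserves the pointwise order on load vectors; I would state this explicitly as the conceptual reason the claim holds, since the same observation will later support the monotonicity and submodularity of the spread functions.
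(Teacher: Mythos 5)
Your proof is correct and follows essentially the same route as the paper's: an induction on $t$ with the base case split into the three regions $S$, $T\setminus S$, and $V\setminus T$, and an inductive step relying on the non-negativity of the edge probabilities in the linear update rule. The explicit remark that the diffusion operator is order-preserving because it is linear with non-negative coefficients is a nice conceptual addition but does not change the argument.
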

\begin{proof}
The proof follows by induction over the number of steps $t$. It is true in the base case when $\tau = 0$ since:
\begin{enumerate} 
\item $\load{v,0,T}= \load{v,0,S} = 0$ for each $v \notin T$; 
\item $\load{v,0,T} = \load{v,0,S} = L$ for each $v \in S$;
\item $\load{v,0,T} = L$ and $\load{v,0,S} = 0$ for each $v \in T\setminus S$.
\end{enumerate}
Now suppose that the claim is true for $t\geq 0$. 
Then at step $t+1$, we have:
\begin{align*}
\load{v,t+1,T}=&
\sum_{w \in \Gamma^{IN}(v)}{p_{(w,v)} \load{w,t,T} }\\ \geq& \sum_{w \in \Gamma^{IN}(v)}{p_{(w,v)} \load{w,t,S}} = \load{v,t+1,S}.
\end{align*} 
The claim follows. 
\end{proof}

\begin{lemma}\lemlab{sbsproperty}
Given the \sbs problem with a seed set size $k$ and given parameters $\tau$, $L$ independent of $k$, then $\sspread{\loads{\tau,\S}}$ is monotone and submodular.
\end{lemma}
\begin{proof}
The monotonicity follows from \lemref{sub1} and the monotonicity of square root. We now consider submodularity. Since the parameters $\tau$, $L$ are given and are independent of $k$, we define $\spread{\S}=\sspread{\loads{\tau,\S}}$ for notational simplicity.
By \defref{submod}, we have to prove that $
\spread{\S \cup \{v\}} - \spread{\S} \geq \spread{T \cup \{v\}} - \spread{T}$ for all $v \in V$ and $\S \subseteq T \subseteq V$.
Let $\Upsilon_{v}$ be the set of nodes in $V$ that can be reached from $v$ with a path of length $\tau$ (possibly with self-loops): nodes in $\Upsilon_{v}$ are all and only the nodes whose load can be affected by the seed in $v$.
We have:
\begin{align*}
\spread{\S \cup \{v\}} - \spread{\S} &= \sum_{u \in \Upsilon_{v}} \sqrt{\load{u,\tau,\S \cup \{v\}}} - \sqrt{\load{u,\tau,\S }}\\ 
&= \sum_{u \in \Upsilon_{v}} \sqrt{\load{u,\tau,\S}+\load{u,\tau,\{v\}}} - \sqrt{\load{u,\tau,\S }}.
\end{align*}
For any $\beta\geq 0$, we have that 
$\sqrt{\alpha + \beta} -\sqrt{\alpha}$ is a non increasing function of $\alpha$, and hence the last term of the previous inequality is lower bounded by 
$$
\sum_{u \in \Upsilon_{v}}  \sqrt{\load{u,\tau,T}+\load{u,\tau,\{v\}}} - \sqrt{\load{u,\tau,T }} = \spread{T \cup \{v\}} - \spread{T}
.$$ 
We then get 
$\spread{\S \cup \{v\}} - \spread{\S} \geq \spread{T \cup \{v\}} - \spread{T}$ that proves the submodularity of \sbs.
\end{proof}

Then, from \thmref{approx} and the above lemma, we get the following result.
\begin{corollary}\corlab{approxgreedy}
There exists a $(1-1/e)$-approximation algorithm for the \sbs problem requiring $\BO{n^3(\log \tau + k)}$ time.
\end{corollary}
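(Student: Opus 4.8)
The plan is to read this corollary as a pure instantiation of \thmref{approx}, supplied with the structural facts already proved, together with a careful accounting of the cost of running the greedy routine on our load-propagation model. There is no new combinatorial content; the work is in checking the hypotheses and in bookkeeping the running time so that the $\log\tau$ factor stays additive.

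First I would verify the three hypotheses of \thmref{approx} on the ground set $U=V$ with cardinality parameter $k$. Non-negativity is immediate: $\tspread{\loads{\tau,\S}}$ is a count of nodes and $\sspread{\loads{\tau,\S}}$ is a sum of square roots of non-negative loads, so both are non-negative for every $\S$. Monotonicity and submodularity in $\S$ are exactly \lemref{tbsproperty} for \tbs and \lemref{sbsproperty} for \sbs. Hence both objectives meet the hypotheses of \thmref{approx}, and applying it directly yields a seed $\S$ of $k$ nodes with $\spread{\S}\geq(1-1/e)\spread{\S^*}$, which is the claimed approximation factor. I would also note explicitly that the greedy routine adds one previously unselected node per iteration, so the returned set has exactly $k$ elements and is therefore a valid $(k,L)$-seed.

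The second, more delicate part is the time bound. By \thmref{approx} the greedy routine costs $\BO{k\Delta}$, where $\Delta$ is the cost of a single $\argmax$ over all candidates. To bound $\Delta$ I would reuse the load-evaluation machinery from the proof of \thmref{tbsNP}: letting $P$ be the $n\times n$ transition matrix with $P_{v,u}=p_{(u,v)}$, the load recurrence gives $\loads{\tau,\S}=P^\tau\,(L\cdot v_{\S})$, where $v_{\S}$ is the indicator vector of $\S$. The key efficiency observation is that $P^\tau$ is computed \emph{once}, up front, by repeated squaring in $\BO{n^2\log\tau}$ time, so this factor enters additively rather than once per spread evaluation. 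Given $P^\tau$, each candidate $u$ costs one matrix–vector product $\BO{n^2}$ to obtain its loads plus $\BO{n}$ to evaluate the spread; ranging over all $n$ candidates gives $\Delta=\BO{n^3}$, so the $k$ iterations contribute $\BO{kn^3}$. Adding the one-time precomputation yields $\BO{n^2\log\tau+kn^3}$, which is dominated by $\BO{n^3(\log\tau+k)}$, as claimed.

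The step I expect to require the most care is precisely this running-time bookkeeping: ensuring the $\log\tau$ term is attached only to the single precomputation of $P^\tau$ and is not multiplied into every marginal-gain evaluation, and confirming that each per-candidate cost is a single matrix–vector multiplication against the already-computed $P^\tau$ rather than a fresh matrix power. Everything else is a direct citation of the preceding lemmas and of \thmref{approx}.
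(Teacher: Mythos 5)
Your proposal is correct and follows essentially the same route as the paper: invoke \thmref{approx} via \lemref{tbsproperty} and \lemref{sbsproperty}, precompute the $\tau$-th matrix power once by repeated squaring, and charge $\BO{n^2}$ per spread evaluation times $\BO{n}$ candidates times $k$ iterations. The only slip is that repeated squaring of an $n\times n$ matrix costs $\BO{n^3\log\tau}$, not $\BO{n^2\log\tau}$ (the paper states the former in this proof), but this does not affect the final bound $\BO{n^3(\log\tau+k)}$.
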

\begin{proof}
The result automatically follows from \thmref{approx} by the monotonicity and submodularity properties of the spread function of  \sbs.
The greedy algorithm in \secref{submodular} gives the $(1-1/e)$-approximation.
Computing the spread for a given seed set $\S$ requires $\BO{n^2}$ time: loads can indeed be computed by the multiplication $A^\tau \cdot \ell_{\S}$, where $A$ is the adjacency matrix of graph $G$ and $\ell_{\S}$ is an $n$-dimensional  vector encoding nodes in $\S$. 
Matrix $A^\tau$ is computed in $\BO{n^3\log \tau}$ time with the doubling trick. 
Since the greedy algorithm has $k$ iterations, and each iteration checks $\BO{n}$ seeds, the claim follows.
\end{proof}

\section{Experiments}\seclab{experiments}
In \secref{buildgraph}, we explain how the input mobility graphs have been obtained, and then in \secref{expeval} we show the findings of our analysis. 
The code and the input graphs are available at \url{https://github.com/AlgoUniPD/BikeSpreadingProblem}.

\subsection{Building the mobility graphs}\seclab{buildgraph}
The input graphs used in the experiments have been built from a dataset containing all rides of the free-floating bike system in Padova (Italy) of the operator Movi by Mobike.
The dataset contains 327K rides from May 1st, 2019 to January 30th, 2020. 
Each ride is described by the anonymized user and bike ids, and by the positions and time stamps of the pick-up and drop-off points. 
The graphs were constructed by following these three steps.

\begin{enumerate}
\item Vertexes are created by snapping each pick-up and drop-off point on a 2-dimensional grid. The grid consists of cells of size $s\times s$ with $s\in\{100\; m, 500\; m\}$. The two grids create two vertex sets of size $9975$ and $399$ respectively, covering a total area of $99.75\; km^2$ (see \figref{area}).
Grid size should be understood as the maximum distance a user is willing to walk to find a bike.

\item The edge probabilities are constructed by two sets of rides: from 6:30 to 9:00 (morning rides) and from 16:00 to 20:00 (evening rides). In both sets, rides refer to all weekdays of October 2019.
For each $u,v\in V$, the probability of an edge $(u,v)$ is set to $n_{u,v}/n_u$, where $n_u$ is the total number of rides originated in the cell represented by node $u$ and $n_{u,v}$ is the total number of rides from node $u$ to node $v$.

\item The graph is pruned by removing all edges with probabilities lower than a given threshold $\eta\in\{0, 0.01, 0.1\}$. For each node $u$, the probability mass of the removed edges originating from $u$ is added to the self-loop $(u,u)$ to guarantee that the weights of outgoing edges sum to one.
Nodes with no edges or with only a self-loop are removed from the graph.
\end{enumerate}

\begin{figure}
\captionsetup[subfigure]{justification=centering}
\centering
\begin{subfigure}[a]{0.4\textwidth}
 \includegraphics[width=\textwidth]{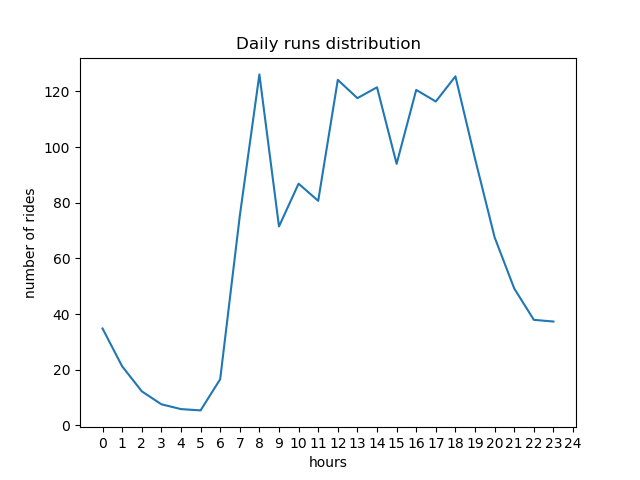}
\subcaption{}
\figlab{daily_distribution}
\end{subfigure}
\hfill
\begin{subfigure}{0.55\textwidth}
\centering
 \includegraphics[width=0.45\textwidth]{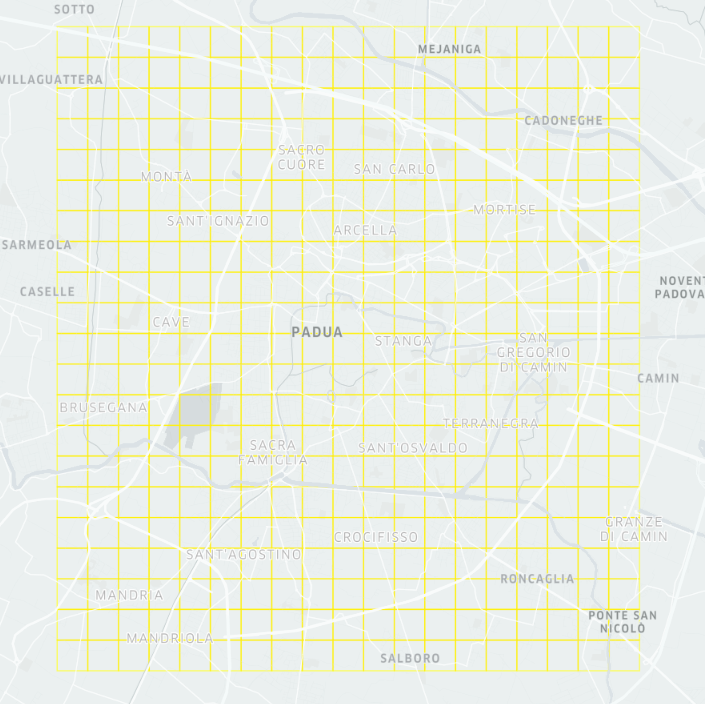}
 \includegraphics[width=0.45\textwidth]{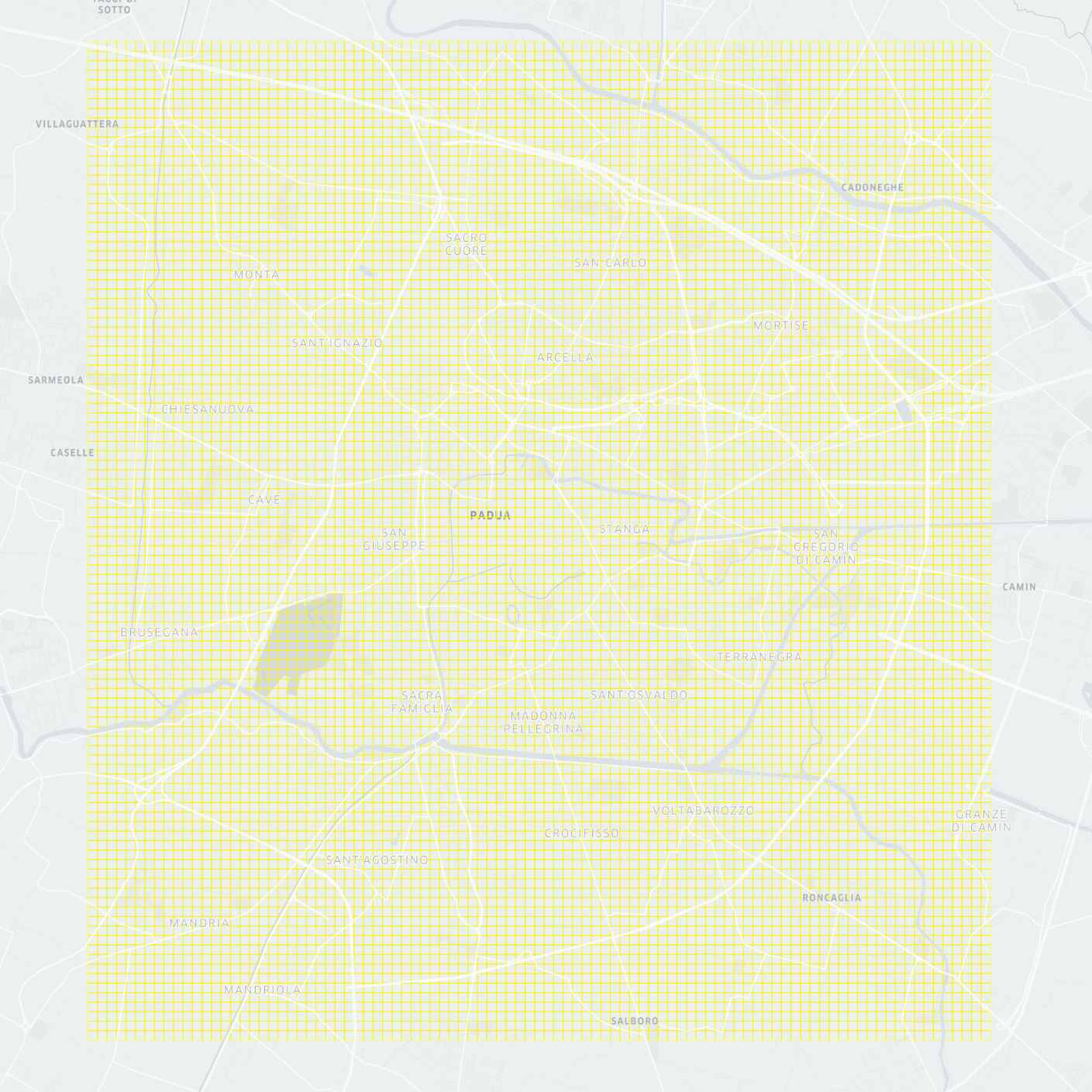}\\
 \includegraphics[width=0.45\textwidth]{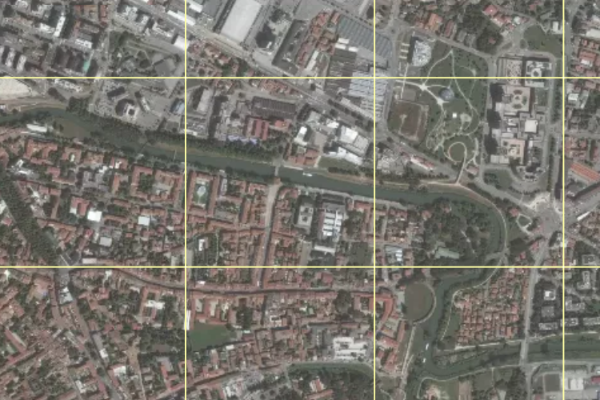}
 \includegraphics[width=0.45\textwidth]{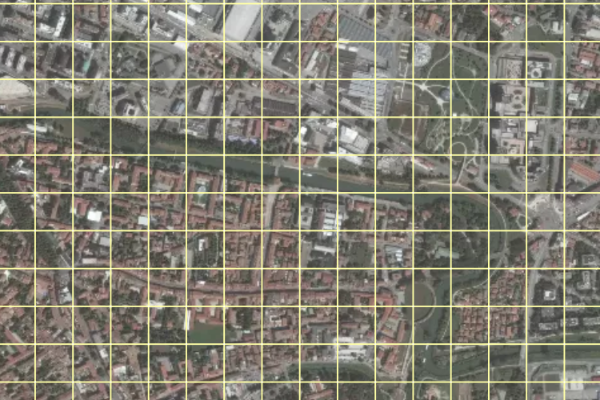}
\subcaption{}
\figlab{area}
\end{subfigure}
\caption{(a) Daily runs distribution during October 2019. There are three picks: around 8.00 (commuting home to work), 13.00 (lunch time, end of school), around 18 (commuting work to home).
(b) Padova subdivided with a grid of 500 m and 100 m, and a detail of the subdivisions. (The underlying street and satellite maps were provided by Kepler.gl.)}
\end{figure}

We thus ended up with 12 graphs $G_{s,\eta,M}$ and $G_{s,\eta,E}$ ($M$ and $E$ mean morning and evening, respectively), whose property is provided in \tabref{graphdata}. 
The intuition behind the three parameters (grid size $s$, morning or evening rides, pruning factor $\eta$) used for creating the graphs is the following.
The different grid size $s$ and pruning factor $\eta$ are used for generating graphs with a different number of nodes and edges, for testing scalability.
The graphs built using morning or evening rides give insight into the mobility flow during the day: the two 
slots are the morning and evening rush hours where workers commute to or from workplaces (see the daily rides distribution in \figref{daily_distribution}).
From a practical point of view, the weights computed from morning rides should be used if bikes are positioned in the seed nodes during the night to exploit the morning bike flow (equivalently, the evening rides should be used for bikes positioned in the afternoon).

\begin{table}
\centering
\begin{minipage}{0.47\textwidth}
\begin{tabular}{|l| r| r| r| r| r|}
\hline 
{\bf Graph} & $\mathbf{\eta}$ & $\mathbf{n}$ & $\mathbf{m}$ & \parbox[t]{1.1cm}{\centering \bf avg. degree}\\
\hline
$G_{100, 0, M}$ & 0.0 & 359 & 1302 & 3.627 \\
$G_{100, 0.01, M}$ & 0.01 & 287 & 954 & 3.324\\
$G_{100, 0.1, M}$ & 0.1 & 139 & 323 & 2.324\\
\hline
$G_{500, 0, M}$ & 0.0 & 111 & 1196 & 10.775 \\
$G_{500, 0.01, M}$ & 0.01 & 107 & 1068 & 9.981\\
$G_{500, 0.1, M}$ & 0.1 & 75 & 272 & 3.627\\
\hline
\end{tabular}
\end{minipage}\hfill
\begin{minipage}{0.48\textwidth}
\begin{tabular}{|l| r| r| r| r| r|}
\hline 
{\bf Graph} & $\mathbf{\eta}$ & $\mathbf{n}$ & $\mathbf{m}$ & \parbox[t]{1.1cm}{\centering \bf avg. degree} \\
\hline
$G_{100, 0, E}$ & 0.0 & 1187 & 5854 & 4.932\\
$G_{100, 0.01, E}$ & 0.01 & 1099 & 4986 & 4.537\\
$G_{100, 0.1, E}$ & 0.1 & 222 & 463 & 2.086\\
\hline
$G_{500, 0, E}$ & 0.0 & 142 & 2625 & 18.486 \\
$G_{500, 0.01, E}$ & 0.01 & 125 & 1810 & 14.480\\
$G_{500, 0.1, E}$ & 0.1 & 92 & 229 & 2.489\\
\hline
\end{tabular}
\end{minipage}
\caption{Properties of the graphs used for the analysis. A graph $G_{s, r, \eta}$ with $s\in\{100\; m,500\; m\}, r\in\{M,E\}, \eta=\{0, 0.01, 0.1\}$ was obtained with a grid of size $s$, with morning (M) or evening (E) rides, and pruning factor $\eta$.}\tablab{graphdata}
\end{table}

\subsection{Performance and quality} \seclab{expeval}
The experimental analysis focuses on the following questions: 
\begin{description}
\item[(Q1)] How close is the solution of the approximate algorithm to the optimal solution?
\item[(Q2)] How does the greedy algorithm scale with input size, seed size, and step number?
\item[(Q3)] How do the \sbs and \tbs models compare?
\item[(Q4)] When bikes should be rebalanced?
\end{description}

All experiments have been executed on an Intel Xeon Processor W-2245 3.9GHz with 128GB RAM and Ubuntu 9.3.0; code was in Python 3. Running times were averaged on 3 executions.

\begin{table*}
\centering
\tiny
 \begin{tabular}{|l|c|r|r|r|r|r|r|r|r|}
		\hline
		& &\multicolumn{2}{|c|}{\parbox[t]{1.8cm}{\centering \bf Brute force $k=2$}} &\multicolumn{2}{|c|}{\parbox[t]{1.8cm}{\centering \bf Greedy $k=2$}}&\multicolumn{2}{|c|}{\parbox[t]{1.8cm}{\centering \bf Brute force $k=4$\vspace{0.5em}}} &\multicolumn{2}{|c|}{\parbox[t]{1.8cm}{\centering 	\bf Greedy $k=4$}}\\
		\cline{3-10}
		{\bf Graph} & $\mathbf{n/m}$ & \parbox[t]{0.8cm}{\centering 	$\mathbf{\sigma}$} & {\bf \parbox[t]{0.8cm}{\centering time [s]}} & \parbox[t]{0.8cm}{\centering 	$\mathbf{\sigma}$} & {\bf \parbox[t]{0.8cm}{\centering time [s]}} & \parbox[t]{0.8cm}{\centering 	$\mathbf{\sigma}$} & {\bf \parbox[t]{0.8cm}{\centering time [s]}} & \parbox[t]{0.8cm}{\centering 	$\mathbf{\sigma}$} & {\bf \parbox[t]{0.8cm}{\centering time [s]}}\\
		\hline
		$G_{500,0.1,M}$ & 75/272 & 32.6 & 0.02 & 32.6 & 0.001 & 43.6 & 7.20 & 42.8 & 0.001\\
		$G_{500,0.01,M}$ & 107/1068 & 55.3 & 0.07 & 55.3 & 0.002 & 63.9 & 71.36 & 63.9 & 0.005\\
		$G_{500,0,M}$ & 111/1196 & 57.3 & 0.07 & 57.3 & 0.002 & 64.1 & 86.16 & 63.8 & 0.005\\
		$G_{100,0,M}$ & 359/1302 & 121.0 & 1.79 & 121.0 & 0.021 & 173.9 & 18514 & 123.0 & 0.040\\
		$G_{100,0,E}$ & 1187/5854 & 185.7 & 99.00 & 185.7 & 0.347 & * & * & 193.3 & 0.555\\
		\hline
	\end{tabular}
	
	\caption{Comparison between the brute force and greedy algorithms for the \sbs version, with seed size $k\in\{2, 4\}$, $\tau=1$, $L=100$. The symbol $*$ means that the instance has not be run due to excessive running time.}
	\tablab{brute_greedy_s}
\end{table*}
	
\begin{table*}
\centering
\tiny
\begin{tabular}{|l|c|r|r|r|r|r|r|r|r|}
		\hline
		& &\multicolumn{2}{|c|}{\parbox[t]{1.8cm}{\centering \bf Brute force $k=2$}} &\multicolumn{2}{|c|}{\parbox[t]{1.8cm}{\centering \bf Greedy $k=2$}}&\multicolumn{2}{|c|}{\parbox[t]{1.8cm}{\centering \bf Brute force $k=4$\vspace{0.5em}}} &\multicolumn{2}{|c|}{\parbox[t]{1.8cm}{\centering 	\bf Greedy $k=4$}}\\
		\cline{3-10}
		{\bf Graph} & $\mathbf{n/m}$ & \parbox[t]{0.8cm}{\centering 	$\mathbf{\sigma}$} & {\bf \parbox[t]{0.8cm}{\centering time [s]}} & \parbox[t]{0.8cm}{\centering 	$\mathbf{\sigma}$} & {\bf \parbox[t]{0.8cm}{\centering time [s]}} & \parbox[t]{0.8cm}{\centering 	$\mathbf{\sigma}$} & {\bf \parbox[t]{0.8cm}{\centering time [s]}} & \parbox[t]{0.8cm}{\centering 	$\mathbf{\sigma}$} & {\bf \parbox[t]{0.8cm}{\centering time [s]}}\\
		\hline
		$G_{500,0.1,M}$ & 75/272 & 11 & 0.04 & 11 & 0.002 
		& 20 & 15.49 & 20 & 0.003\\
		$G_{500,0.01,M}$ & 107/1068 & 31 & 0.22 & 30 & 0.008 
		& 35 & 196.93 & 35 & 0.015\\
		$G_{500,0,M}$ & 111/1196 & 31 & 0.24 & 30 & 0.008 
		& 36 & 234.79 & 35 & 0.016\\
		$G_{100,0,M}$ & 359/1302 & 51 & 8.14 & 51 & 0.089 
		& * & * & 57 & 0.176\\
		$G_{100,0,E}$ & 1187/5854 & 81 & 302.17 & 81 & 1.03 
		& * & * & 84 & 2.06\\
		\hline
	\end{tabular}	
	
	\caption{Comparison between the brute force and greedy algorithms for the \tbs version, with seed size $k\in\{2, 4\}$, $\tau=1$, $L=100$, $\gamma=1$. The symbol $*$ means that the instance has not be run due to excessive running time.}
	\tablab{brute_greedy_t}
\end{table*}

\paragraph*{Question Q1: exact vs approximate solutions}
\tabref{brute_greedy_s} shows the spread and running time for the brute force exact algorithm and the greedy approximation algorithm with the \sbs version: since the brute force has $\BO{n^k}$ time, which is exponential in seed size, we notice a quick increase of the running time even for small seed size.
The greedy algorithm is more performing than the brute force, although it is visible the $n^3$ dependence in the running time.
The quality of the approximation is quite high, outperforming the theoretical worst-case upper bound of $1-1/e$. 
Similar results hold for \tbs as provided in \tabref{brute_greedy_t}: the running times show a small increase with respect to the previous table for \sbs, mainly due to the conditional statements for checking if a load is smaller than the threshold $\gamma$. 
In all cases, the spreads are very close, although the seed provided by the brute force and the greedy algorithms do not completely overlap. (see \tabref{seedbrute}).

\begin{table}
\centering
\begin{tabular}{|l | r| r |} 
\hline
{\bf Graph} & {\bf Seed overlap under \sbs} & {\bf Seed overlap under \tbs} \\
\hline
$G_{500,0.1,M}$  & 25\% & 50\%*\\
$G_{500,0.01,M}$ &  100\%* & 75\%*\\
$G_{500,0,M}$    & 25\% & 25\%\\
\hline
\end{tabular}
\caption{Seed comparison between the brute force algorithm and the approximate algorithm for the \sbs and \tbs models for $k=4$ (other parameters: $\tau=1$, $L=100$, $\gamma=1$). The star $^*$ after a seed set means that the solution provided by the approximate algorithm has the same spread of the optimal one.}\tablab{seedbrute}
\end{table}

\paragraph*{Question Q2: scalability}
We now analyze the running time of the greedy approach for increasing values of input size, number of steps, and seed size. 
Since the greedy algorithms for \sbs and \tbs are almost equivalent, we only provide results for the first one.
\tabref{tau} shows the running time on the largest graph ($G_{100,0,E}$), three different seed sizes $k\in\{2,4,8\}$ and three different step numbers $\tau\in\{1,10,100\}$. 
For a given $k$, the times are almost equivalent: $\tau$ only affects the initial computation of $A^\tau$, where $A$ is the adjacency matrix of the graph;
since the powering requires time $\BO{n^3\log \tau}$, the logarithmic dependence on $\tau$ is negligible and hidden by the cost of the $k$ iterations of the greedy algorithm.
The previous table already shows that the running time has a linear dependency on the seed size. 
\figref{seedsize} expands this analysis by considering different graph sizes. 
Note that the curve of $G_{100,0,E}$ has been scaled by a factor of 10 to fit the plot space.
All curves show a  linear dependency in $k$.

\begin{figure}
%\begin{table}
\begin{minipage}{0.5\textwidth}
	\centering
 \begin{tabular}{|l|l|r|r|r|}
		\hline
		{\bf Graph} & $\mathbf{k}$ & \parbox[t]{1cm}{\bf \centering $\mathbf{\tau=1}$ [ms]}& \parbox[t]{1cm}{\bf \centering $\mathbf{\tau=10}$ [ms]} & \parbox[t]{1.2cm}{\bf \centering $\mathbf{\tau=100}$ [ms]}\\
		\hline
		$G_{100,0,E}$ & 2 & 334.2 & 352.4 & 384.4 \\
		$G_{100,0,E}$ & 4 & 629.8 & 651.0 & 657.4 \\
		$G_{100,0,E}$ & 8 & 1271.8 & 1277.2 & 1278.8 \\
		\hline
	\end{tabular}
	\captionof{table}{Running time using $G_{100,0,E}$ with different values of the seed set size $k$ and of the step number $\tau$.}
\tablab{tau}
%\end{table}
\end{minipage}	
\hfill
\begin{minipage}{0.45\textwidth}	
\centering
\begin{tikzpicture}[scale=0.50]
\begin{axis}[
 xlabel={Seed size},
 ylabel={Time [ms]},
 xmin=0, xmax=32,
 ymin=0, ymax=600,
 xtick={2,4,8,16,32},
 ytick={100,200,300,400,500,600},
 legend pos=north west,
 ymajorgrids=true,
 grid style=dashed,
]
\addplot
 coordinates {
 (2,4.87)	(4,9.29)(8,17.85)(16,31.42)(32,57.23)
 };
\addplot
 coordinates {
 (2,14.92)(4,24.18)(8,46.06)(16,85.66)(32,160.17)
 };
\addplot
 coordinates {
 (2,21.48)(4,37.94)(8,71.60)(16,138.10)(32,262.21)
 };
\addplot
 coordinates {
 (2,21.48)(4,37.94)(8,71.60)(16,138.10)(32,262.21)
 }; 
\addplot
 coordinates {
 (2,35.86)(4,47.20)(8,95.03)(16,280.17)(32,556.85)
 }; 				 
 \legend{$G_{100, 0.1, M}$,$G_{100, 0.01, M}$,$G_{100, 0, M}$, $G_{100, 0, E}$ (time x10)} 
\end{axis}
\end{tikzpicture}
\captionof{figure}{Running time with respect to seed size $k$. The curve of $G_{100,0,M}$ has been scaled by a factor x10 for better fitting.
% (i.e., the actual value is x10 larger).
}
\figlab{seedsize}
\end{minipage}
\end{figure}

\paragraph*{Question Q3: \tbs vs \sbs}
In this experiment, we compare the two models. 
Intuitively, \tbs maximizes the number of nodes with a minimum number $\gamma$ of bikes: for instance, by setting $\gamma=1$, the algorithm maximizes the number of cells with at least one bike (in expectation).
On the other hand, \sbs aims at uniformly distributing bikes among nodes, even if this implies that some nodes have a low expected number of bikes (even $<1$).
This allows for more fair use of bikes since it increases the load in suburb areas, differently than \tbs that facilitates central (and more crowded) areas.
In the heatmaps in \figref{spreadarea}, we compare the two models on the $G_{500,0,M}$ (heatmaps (a) and (b)) and $G_{100,0,M}$ (heatmaps (d) and (e)) graphs. 
Each heatmap shows how bike distributes, after $\tau=2$ steps, by positioning 400 bikes in the $k=4$ seed set selected by the greedy algorithm.
We notice that \sbs covers a larger fraction of nodes in the map than \tbs.
The phenomenon is more evident in the 100m grid, where \sbs colors a larger number of cells in the east and south suburb areas than \tbs. 
although, the majority of the cells are reached by a small number of bikes, mostly less than one bike in expectation.

\begin{figure*}[t]
 \begin{subfigure}{0.30\textwidth}
 \centering
 \includegraphics[width=1\linewidth]{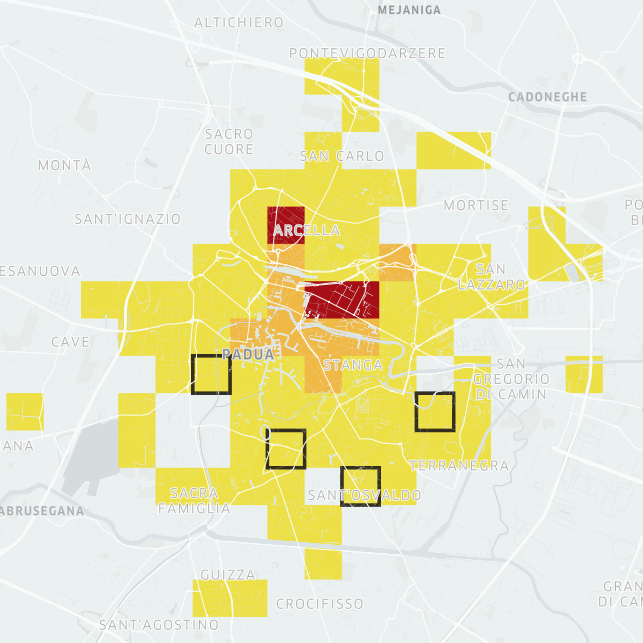}
 \caption{\tbs on 500 m, morning}
 \end{subfigure}\hfill
 \begin{subfigure}{0.30\textwidth}
 \centering
 \includegraphics[width=1\linewidth]{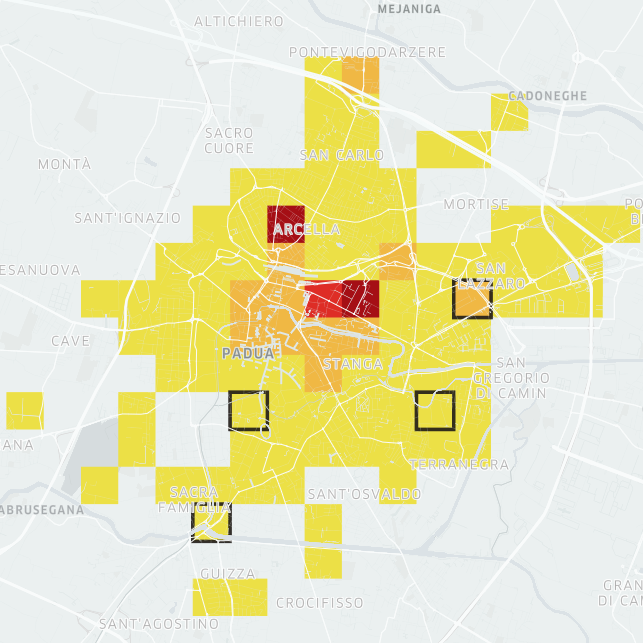}
 \caption{\sbs on 500 m, morning}
 \end{subfigure}\hfill
 \begin{subfigure}{0.30\textwidth}
 \centering
 \includegraphics[width=1\linewidth]{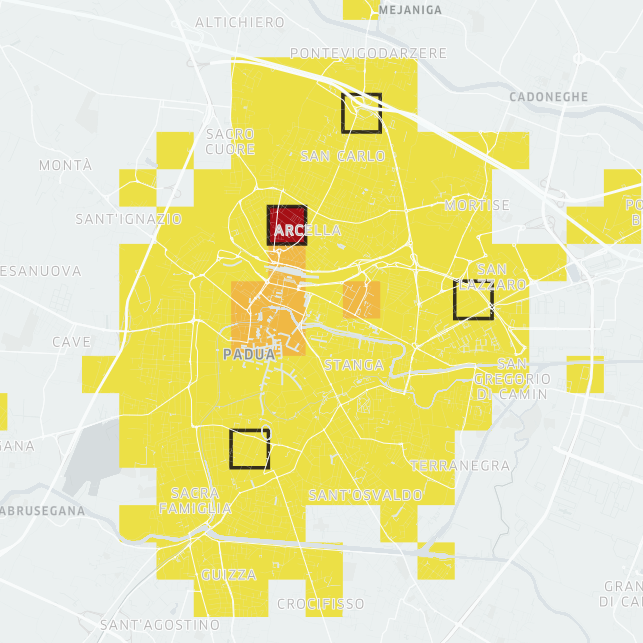}
 \caption{\sbs on 500 m, evening}
 \end{subfigure}
 \hfill
 \\

 \begin{subfigure}{0.30\textwidth}
 \centering
 \includegraphics[width=1\linewidth]{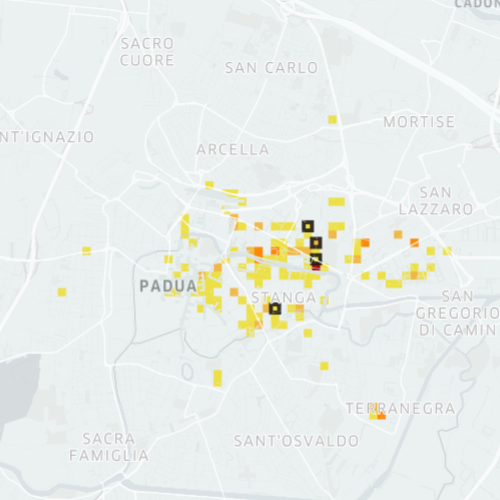}
 \caption{\tbs on 100 m, morning}
 \end{subfigure}\hfill
 \begin{subfigure}{0.30\textwidth}
 \centering
 \includegraphics[width=1\linewidth]{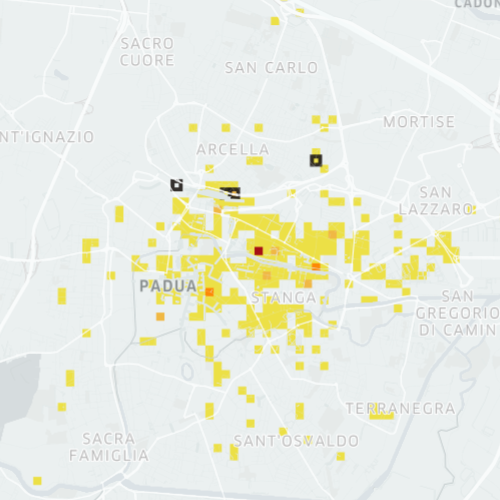}
 \caption{\sbs on 100 m, morning}
 \end{subfigure}\hfill
 \begin{subfigure}{0.30\textwidth}
 \centering
 \includegraphics[width=1\linewidth]{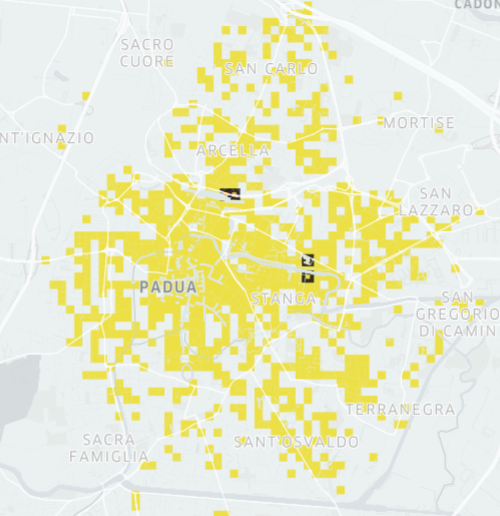}
 \caption{\sbs on 100 m, evening}
 \end{subfigure}
 \caption{Bike diffusion after $\tau=2$ steps by positioning $L=100$ bikes in each node of the seed set of size $k=4$ selected by the greedy algorithm. For \tbs, we set $\gamma=1$. The cells in the seed set are marked with black contours. Color scale for 500m plots: $(0.0, 2.4]$ (yellow),$(2.4,4.8],(4.8,7.2],(7.2, 9.6], (9.6,12]$ (dark red); Color scale for 100m plots: $(0.0, 1.4]$ (yellow), $(1.4,2.8],(2.8,4.2],(4.2, 5.6], (5.6,7]$ (dark red). (The underlying street maps were provided by Kepler.gl.)} 
\figlab{spreadarea}
\end{figure*}

\paragraph*{Question Q4: When rebalancing?}
Consider the graphs $G_{100,0,M}$ and $G_{100,0,E}$: both graphs use the 100m grid and no edge pruning. 
Graph $G_{100,0,E}$  has a 
larger number of edges and nodes than $G_{100,0,M}$ (see \tabref{graphdata}), highlighting a change in the use of the FFBSS service from morning and evening: in the morning, rides are mostly directed towards workplaces, the train station, and university departments which are mainly located in the city center and on the east side; in the afternoon, there are much more activities (e.g., having a \emph{spritz} with friends, going to the gym, shopping) and hence the graph covers a wider area of Padova.
Bikes can be more spread in the city if we use the more vibrant afternoon for rebalancing bikes.
This is confirmed by the simulation of \sbs using $G_{100,0,M}$ and $G_{100,0,E}$ in the heatmaps (e) and (f) of \figref{spreadarea}: by using the evening graph, the bikes significantly spread covering also the north and west parts of the city. Similar results hold for $G_{500,0,M}$ and $G_{500,0,E}$ in heatmaps (b) and (c).

\section{Conclusion}
In this work, we have proposed a graph approach to spread bikes in a free-floating bike system to cover a large number of zones of the service area; the idea is to select a set of zones where to position bikes and let the mobility flow spread them around the city.
The current model assumes that, initially, only seed nodes have bikes while the other nodes are empty. This assumption can be removed by allowing any node to initially have some bikes: it can be shown that submodularity and monotonicity still hold for \sbs, and thus the greedy algorithm provides a $1-1/e$ approximate solution also in this case.
Another extension to investigate is to allow a different distribution of bikes among seed nodes, to balance skewness in bike distribution due to hotspots.
Finally, an important research direction is to analyze the behavior of the model on a real free-floating bike system and the differences between the theoretical findings and the actual distribution.

\bibliography{biblio}
\end{document}